  \providecommand\BibTeX{{%
    \normalfont B\kern-0.5em{\scshape i\kern-0.25em b}\kern-0.8em\TeX}}}
\begin{document}

\title{RTM: Blockchain That Support Revocable Transaction Model}

\author{Victor Gates}
\email{victor@gateio.me}
\affiliation{%
  \institution{Gate.io Technology Inc.}
  \streetaddress{ P.O. Box 31119 Grand Pavilion, Hibiscus Way, 802 West Bay Road, Grand Cayman, KY1-1205}
  \city{Cayman Islands}
}








\renewcommand{\shortauthors}{Victor Gates}

\begin{abstract}
In many typical application scenarios, it is necessary to revoke the incorrect account operations caused by user mis-operation, financial fraud, illegal hacking, etc. Unfortunately, users often blur the lines between the concept of "transaction state revocable" and "business status revocable", which result in revocable transaction not universally supported in blockchain systems at present.

In this work, we propose \textsc{Gate}\textsc{Chain}, a blockchain that support revocable transaction model (RTM) on distributed ledger. Specifically, based on the state-of-the-art block-chain technologies, \textsc{Gate}- \textsc{Chain} can safely withdraw the account status change opera-tions by leveraging an improved account model and extra designed transaction types. On that basis, \textsc{Gate}\textsc{Chain} exploit the characteristics of functional completeness, easy to deployment and lower complexity.

\end{abstract}

\begin{CCSXML}
<ccs2012>
 <concept>
  <concept_id>10010520.10010553.10010562</concept_id>
  <concept_desc>Computer systems organization~Distributed systems</concept_desc>
  <concept_significance>500</concept_significance>
 </concept>
 <concept>
  <concept_id>10010520.10010575.10010755</concept_id>
  <concept_desc>Computer systems organization~Redundancy</concept_desc>
  <concept_significance>300</concept_significance>
 </concept>
 <concept>
  <concept_id>10010520.10010553.10010554</concept_id>
  <concept_desc>Computer systems organization~Robotics</concept_desc>
  <concept_significance>100</concept_significance>
 </concept>
 <concept>
  <concept_id>10003033.10003083.10003095</concept_id>
  <concept_desc>Networks~Network reliability</concept_desc>
  <concept_significance>100</concept_significance>
 </concept>
</ccs2012>
\end{CCSXML}

\ccsdesc[500]{Computer systems organization~Distributed systems}

\keywords{Blockchain, Revocable, Transaction, Account State, Security. }


\maketitle

%
%

\section{Introduction}

Blockchain technology is innovative and continuously evolving. It has gained unprecedented attention from both the research and industrial communities. As an important application of a blockchain, a distributed ledger can effectively record the change in user accounts. However, in many typical application scenarios, it is necessary to revoke the incorrect account operations caused by user mis-operation, financial fraud, hacking, etc. In particular: 

\begin{itemize}
\item In a distributed ledger, the private key of a user's account is the unique identifier of the user's identity. If the private key is stolen due to hacking or misuse, the assets held in the account cannot be recovered. A major event where this happened was on February 7, 2014, Mt.Gox, an exchange that traded 70\% of all Bitcoin at that time, lost $850,000$ bitcoins due to a hack. This event caused the exchange to go bankrupt, and to date,  has not compensated the loss to its users. 

\item In both Bitcoin\cite{ref11} and Ethereum\cite{ref12} systems, if assets are transferred to a wrong address due to any reason, technically the loss cannot be recovered. This can be caused by user's inappropriate operations, as well as malicious financial fraud. 
\end{itemize}

To ensure the safety of standard user operations, a blockchain that supports revocable operations has received widespread attention \cite{ref01}. Noteworthy, currently in both research and industry communities, some hold a traditional view believing that once transaction data is recorded on a blockchain system, it should remain immutable in any case. The "classic" Ethereum project ETC \cite{ref02} is an outstanding example. Contrary to it, those holding a development view typically argue that technology itself is continuously evolving, and the future development of technology should not be defined by the current characteristics of a blockchain \cite{ref03,ref04}. The root cause of this controversy lies in the confusion between "transaction state revocable" and  "business status revocable" on a blockchain. To be specific, from a business perspective, the irreversible nature of a blockchain transaction safely prevents data on chain from being tampered with in a decentralized manner. 

This provides a low-cost mechanism for data safety, which is considered to have broad application prospects \cite{ref05}. However, from the end-user's point of view, data modification and operations is undoubtedly a necessary function for an effective business system. Clearly, under strict regulatory conditions, a ledger system should provide the functionality of modifying and withdrawing the operation performed on the account states. 

Generally, the existing revocable technology on Bitcoin and Eth- ereum systems can be divided into three categories. The first type is represented by the EOS system \cite{ref06} which approves the revocation of operations by majority voting in quorum. However, this type of technology does not support the rollback of the transaction itself. This means that the data recorded on the blockchain cannot be changed. The second type is applying a "hard fork", which completely modifies the chain and account states. A "hard fork" can be seen as controversial by the industry. An example of this happened in June 2016, when a project named "The DAO" was hacked which resulted in a large amount of Ethereum (ETH) being stolen. This forced the Ethereum community to perform a hard fork to recover the stolen ETH \cite{ref07}. The third type uses advanced technology like the Lightning Network \cite{ref08}. The Lightning Network can cancel unconfirmed transactions, but it is not suited for a transaction that is already recorded on the blockchain. Similarly, revocable technologies on blockchains has also received widespread attention in the academic field. Specifically, some research \cite{ref09,ref10} have proposed an editable blockchain technique to modify the block by direct compliance to achieve the purpose of the revised operations. However, these technologies are mostly implemented as prototype systems. The safety and reliability of such academic systems have not yet been verified.

To address this issue, in this paper we take the first step toward realizing revocable operations on a blockchain. Targeting real-world application requirements, such as the loss of a user's key and transfer operation errors, we propose \textsc{Gate}\textsc{Chain}, a blockchain system that supports a revocable transaction model (RTM) on a distributed ledger. Specifically, based on state-of-the-art blockchain technologies, \textsc{Gate}\textsc{Chain} can safely withdraw the account status change operations by leveraging an improved account model and extra designed transaction types. On that basis, \textsc{Gate}\textsc{Chain} exploits the characteristics of functional completeness, is easy to deploy, and lowers complexity. 

To summarize, the contributions we made in this paper are as follows:

\begin{itemize}
  \item To the best of our knowledge, \textsc{Gate}\textsc{Chain} is the first work on realizing revocable account operations by leveraging an improved account model and extra designed transaction types. Considering that this blockchain provides support for transaction verification, the revocable operations proposed in this paper adopted the transaction model of a \textsc{Utxo}-based block- chain. In essence, the transaction method can make full use of proven, mature blockchain technologies, and can be regarded as a "native" realization of the revocable operations.
  \item \textsc{Gate}\textsc{Chain} proposes a user-friendly account model. Compared with the traditional blockchain account model adopted by Bitcoin and Ethereum, the account model of \textsc{Gate}\textsc{Chain} introduces a radical new vault account type, which is designed to support the withdrawal of account status change. The design of this vault account provides users with better security.
  \item A real world application of \textsc{Gate}\textsc{Chain} is provided as a verifiable implementation of revocable business logic, with more comprehensive considerations on security, usage, as well as performance issues.
\end{itemize}

The rest of the paper is organized as follows: 

Section \ref{sec2} introduces the design of RTM.

Section \ref{sec3} explains the safety considerations of RTM

Section \ref{sec4} presents the design of \textsc{Gate}\textsc{Chain}, a verifiable realization of RTM. 

Section \ref{sec5} summarizes the paper and mentions the direction of future work.

%
%

\section{The Design of RTM}\label{sec2}

Blockchain technology provides a decentralized distributed ledger architecture which supports complete, sequential transaction chain recording. In a typical distributed ledger application, the user's assets are represented as the states in the account, and the states can only be changed by designated transactions, which are recorded and managed by a blockchain system. In essence, from the application level, revoking a transaction is not deleting the records of the transaction from the blockchain, but to support the user to withdraw the account status change caused by the transaction.

In this section, we propose a Revocable Transaction Model (RTM), a blockchain-based transaction model that supports revocable logic, for the sake of avoiding user account losses caused by misuse, financial fraud, and illegal hacking. The key features of RTM can be summarized as:

\begin{itemize}

\item The intrinsic value of RTM is based on state-of-the-art block- chain technologies, which ensures that all transactions recorded on the chain cannot be tampered with.

\item To support the revocation of account status, we propose a radical new transaction model called RTM. With additionally designed transactions, RTM can support the revocation of the account status change within a designated delay period $\theta$. The design of these transactions ensure the completeness of the extended transaction state machine.

\item To support revocable transactions, RTM  includes the structure of a "\textit{Vault Account}" which is especially created for the revocation of account status. In real world applications, the vault account mechanism can tackle the problem of user private key loss, thus greatly enhance the security of the whole ledger application.

\end{itemize}

\subsection{Basic Concepts}\label{sec2-1}

Intrinsically, the basic idea of the RTM is to initiate a transaction that support withdraw the change of given account status within a certain period of time, i.e. the delay period $\theta$. To be specific, the account can be seen as the collection of business status of a given user at a given time. Consider the reliability guarantee provided by the blockchain, the transactions between accounts can be assumed as safe and reliable, with no additional design required. Therefore, the status of an account is often designated as a numerical value. The change in account assets is reflected by the change of the account status, which is represented by a real value.

Henceforth, we can give the following formal definitions of the concepts used throughout this paper.

\begin{definition}[State] \label{def01}

 As stated above, The state of a user is generally represented by a rational number $s \in R$. The state at time $n$ (i.e., the block height) is represented as $s^n$.
\end{definition}

\begin{definition}[Account] \label{def02}

 An account is defined as the collection of states. Given a state space $S$, an account space $A$, an account $A(k)\in A$ that has $k$ states can be formally defined as:

\begin{center}
$A(k)={s_1,\ldots,s_i},s_i \in S,i=[1,k]$
\end{center}

The state of $A(k)$ at time n can be represented as:

\begin{center}
$S_k^n={s_1^n,\ldots,s_i^n},s_i \in S,i=[1,k]$
\end{center}

\end{definition}

\begin{definition}[Transaction] \label{def03}
 A transaction $T^i$  represents a change of states.
\end{definition}

\begin{definition}[State Machine] \label{def04}
 A state machine represents the change of account states over time. Given state transition function $F(I,S): S \times S\rightarrow S$, where $I,S \in S$ are the input and initial state of $F(S,I)$ respectively, the state machine can be formally defined as a collection of state transition functions, $M=\lbrace F(S^i,R^i)\rightarrow S^{i+1}\rbrace$. Among them, $S^i={S_x^i}$ is the states of all account at the $i$-th transaction,  $R^i={R_x^i}$ is the input of the account at the $i$-th transaction.
\end{definition}

As stated in the above definitions, a state machine represents the relations among all transactions. In a blockchain system, transactions are grouped into blocks according to certain rules, and the blocks form a directed chain structure, which generate the basic data structure of the blockchain. In retrospect, the transaction at time $i+1$ can be formally defined as:

\begin{center}
$T^i=F(S^i,R^i) \rightarrow S^{i+1}$.
\end{center}

\begin{definition}[Ledge] \label{def05}
 A ledger is a collection of transactions. The ledger at time $n$ is represented as $L(n)={T^i},i=[1,n]$. Specifically, a ledger is an ordered set of transactions.
\end{definition}

\begin{definition}[Revocable State] \label{def06}
Definition 6 (Revocable State). From business aspect, revocable state means that the change on the account states should be partially or completely withdrawn under strict regulatory conditions. For a given transaction sequence $i$ and $j$, where $i<j$, partially revocable means that changes on some states of account $A(x)$ can be withdrawn. i.e., $s^j=s^i$,$s^i \in S_x^i$,$s^j \in S_x^j$. The complete revocable means that all states of account $A(x)$ can be withdrawn. i.e., $S_j=S_i$.
\end{definition}

\begin{definition}[Revocable Ledger] \label{def07}
A revocable ledger is a distributed ledger system which uses a blockchain that supports a complete revocable state. 
\end{definition}

\subsection{Design of Account Model}\label{sec2-2}

In this section, we introduce the basic design of the account model proposed by RTM. First, we give the formal definition of an account used in a ledger system.

\begin{definition}[Account] \label{def08}

 The account $A(k)$  of user $k$ can be formally defined as:

$A(k)= \lbrace k,addr(A),type,S(k),key(k),L(t),I(k),data \rbrace$

Where:

\begin{itemize}
\item $k$ is the unique identifier of account $A(k)$;
\item $addr(k)$ is the address of $A(k)$, i.e., an encoding of  $A(k)$'s public key;
\item $type$ represent the type of $A(k)$, $ \in {n,v}$, where "$n$" state for standard account, and "$v$" state for vault account;
\item $S(k)$ is the state managed by account $A(k)$, where $S_k^i$ represent the state of account  $A(k)$ at time $I$;
\item $key(k)$ is the private/public key pair of account $A(k)$.
\end{itemize}

\end{definition}

In order to support revocable status, RTM introduces two types of accounts, namely the standard account and the vault account. The standard account refers to the account used in traditional \textsc{Utxo}-based blockchain systems. The standard account is used for conventional account transfer transactions. In this setting, the transaction changes the account status immediately, and it does not support revocable states. 

The vault account is specifically designed for RTM. It supports revocable states. From the user's point of view, after an account transfer transaction is initiated by a vault account, the change of the account states can only come into effect after a delay period $\theta$. Within the delay period $\theta$, the user can initiate a revocable transaction to withdraw the change of the vault account states.
For vault account A(k),if type=v,then the follow holds:

\begin{itemize}
\item $L(t)$ is the clear time (block height) of account $A(k)$. 
\item $I(k)$ defines the account where the revoked state is assigned to. Namely, $I(k)$ is called as the "retrieval account" of vault account $A(k)$, and generally located on a reputable third-party exchange platform;
\item Data is a list of transactions which records all the revocable transactions initiated by vault account $A(k)$.
\end{itemize}

Generally, the vault account is designed with the following characteristics:

\begin{itemize}
\item \textbf{Time delay}: The revocable transaction $T^i=F(\lbrace S_k^i,S_j^i \rbrace ,R_k^i)$ $\rightarrow \lbrace S_k^{i+1},S_j^{i+1}\rbrace $ can only be initiated by a vault account  $A(k)$. For each revocable transaction, a delay period $\theta$ is assigned. From the application point of view, the account status change can be rolled back during the period of $\theta$.
\item \textbf{Revocable}: Within the designated delay period $\theta$, if $A(k)$ initiates a revocable transaction $T^j$  to revoke the operation of $T^i$, the revocable state can be rolled back to the original state.
\end{itemize}

\subsection{Design of Revocable Logic}\label{sec2-3}

Revocation logic represents the core of the revocable transaction model (RTM). Therefore, the additional designed transactions in RTM are detailed as follows:

\begin{definition}[$T_{VAR}$, VaultAccountRevocable Transaction] \label{def09}

Transaction $T_{VAR}$  define a revocable transaction. It is initiated by a vault account $A(k)$, implementing a tentative account transfer operation from $A(k)$ to $A(j)$. Formally:

$T_{VAR}^i (k,j,\theta)=F(\lbrace S_k^i,S_j^i \rbrace ,R_k^i) \rightarrow \lbrace S_k^{i+1},S_j^{i+1}\rbrace, \ i \in N$

Where $\theta$ is the designated delay time of $T_{VAR}^i$,and $S_k^{i+1}=S_k^i-R_k^i$, $S_j^{i+1}=S_j^i+R_k^i$.
\end{definition}

\begin{definition}[$T_R$, Revoke Transaction] \label{def10}

Transaction $T_R$ is initiated by a vault account $A(k)$ to revoke the tentative account transfer operation $T_SAR^i (k,j,\theta)$ from the original vault account $A(k)$. The revocable states of $A(k)$ is transferred to its bounded safety account $I(k)$. Formally:

$T_R^m (k,T_{VAR}^i)=F(\lbrace S_k^{i+1},S_j^{i+1}\rbrace ,I(k)) \rightarrow \lbrace S_k^i,S_j^i \rbrace , \ i,m \in N$
\end{definition}

The condition for $T_R$ to be correctly executed, attribute to that the block height of transaction T$^m$ is less than $\theta$.

\begin{definition}[$T_{VAC}$, VaultAccountCreation Transaction] \label{def11}

Transaction $T_{VAC}$ is used to create a vault account $A(k)$. $T_{VAC}$ is initiated by a normal account $A(j)$ by issuing an account transfer transaction, formally:

$T_{VAC}^i=F( \lbrace S_k^i,S_j^i,I(k) \rbrace ,R_k^i) \rightarrow \lbrace S_k^{i+1},S_j^{i+1} \rbrace, \ \ i \in N$

\end{definition}

Note that $T_{VAC}$ is not a revocable transaction. Once $T_{VAC}$ is successfully executed, the state assigned by $R_k^i$ is transferred from $A(k)$ to the newly created vault account $A(j)$, and the state of $A(j)$ is changed to $S_j^{i+1}$,  as well as the state of $A(k)$ is changed to $S_k^{i+1}$, where $S_j^{i+1}-S_k^{i+1}=R_k^i$. In the delay period $\theta$, $A(k)$ can initialize a revocable transaction $T_R$ , withdraw the assets to the retrieval account $I(k)$ bounded by vault account$ A(k)$.

Based on the designed transaction $T_{VAR}$ and $T_R$, the revocation logic of RTM is defined as follows:

\begin{enumerate}

\item At time $i$, user $k$ initiates a revocable transaction $T_{VAR}$ on vault account $A(k)$. $T_{VAR}$ specifies the tentative amount of transfer assets $I$, the destination account $A(j)$ of user $j$, and the delay period $\theta$ for the transfer to take effect.

\item After the transaction is verified and confirmed on the blockchain, $T_{VAR}$ , as a transaction, is recorded on the chain, and the status changes involved in  $T_{VAR}$ are also recorded in accounts $A(k)$ and $A(j)$, but the  state $S_k^i$ and $S_j^i$ are not immediately changed.

\item If at time $i<n<\theta$, user $k$ finds problems in the transaction, the user can initiate a revoke transaction $T_R$, and revoke the transfer amount in $T_{VAR}$ to a retrieval account $I(k)$ bounded by $A(k)$.

\item If at time $i<n<\theta$,  a revoke transaction $T_R$ targeting a $T_{VAR}$ is initiated, and $T_R$ is verified and confirmed on the blockchain, then all $T_{VAR}$ transactions initiated after time $n$ are invalid transactions.

\item At time $n=\theta$ , and $T_{VAR}$ is not revoked, the account states of both users remains unchanged. i.e., the transfer operation at time $i$ is delayed. That is, $S_k^{n+1}=S_k^n-R_k^i, S_j^{n+1}=S_j^n+R_k^i$.

\item At time $n \ge \theta$, any revoke transaction $T_R$ initiated for $T_{VAR}$ will be an invalid transaction.

\end{enumerate}

\subsection{Design of RTM's Security Model}\label{sec2-4}

\newtheorem{assumption}{Assumption}

As stated by the early work of Rosenfeld\cite{ref13}, the cost of an attack grows exponentially, which is also correct as long as the blockchain system adopted consensus in mining blocks. In this work, we adopted the formal model proposed by Garay et. al. \cite{ref14} for the core of  the Bitcoin protocol, where blocks can be found simultaneously as each mining steps. Therefore, we assume the following assumptions holds for the security model of the generalized RTM.

\begin{assumption}
The mining rate $\lambda$ of the blockchain remains constant over time. e.g. in Bitcoin, $\lambda=\frac{1}{600}$ blocks per second.
\end{assumption}

\begin{assumption}
  Under reliable networks, the block creation rates are much higher than the propagation time of blocks. Surely, the block propagate in the network is very fast relative to $\frac{1}{\lambda}$.
\end{assumption}

\begin{assumption} 
  The honest miners only keep track of the longest chain they have been presented. i.e., the longest chain prevails. 
\end{assumption}

Under these assumptions, the correct chain at time (height) h is uniquely determined as $L(n)=\lbrace B^h \rbrace$, $h=[1,t]$, where $B^h$ denotes a block at time h. Clearly, $B^h$ holds a collection of accepted transactions. Therefore, we formally define the acceptance probability of a transaction.

\begin{definition}[Acceptance Probability] \label{def12}

 The acceptance probability of a transaction that is robust against an arbitrary malicious attack. Formally, based on the above assumption, given block $B^h$ at height $h$. For a correct transaction $T_i^h$  to be record in block $B^h$, the acceptance probability can be roughly estimated as

\end{definition}

$\Pr_{accepted} (T_i^h)= \lbrace \exists i: height(T_i^h) \ge h + \frac{i}{\lambda} \rbrace$

Similarly, for a malicious transaction to be recorded in block block $B^h$, the acceptance probability can be roughly estimated as

$\Pr_{malicious} (T_i^h)= \lbrace i: h < height(T_i^h) < \infty \rbrace$

\begin{definition}[$\varepsilon$-robust Transaction] \label{def13}

A transaction is $\varepsilon$-robust transaction, or robust against malicious attack of blockchain, iff for any transactions $T_i^h$ at height $h$, the following holds:

\begin{center}
$\frac{\Pr_{malicious} (T_i^h)}{\Pr_{accepted} (T_i^h)} < \varepsilon $
\end{center}

\end{definition}

\begin{definition}[$\varepsilon$-robust Model] \label{def14}

 A transaction model is $\varepsilon$-robust Model, iff for any transactions $T_i^h$  initiated by attackers, the following holds:

\begin{center}
$\frac{\sum_{i \in T} \Pr ( \varepsilon_{mailicious} T_i^h )) }{h}  < \varepsilon$
\end{center}

\end{definition}

Without loss of generality, we adopt \cite{ref15} to define the attacker's policy as a function that determines the action of the attacker at every possible state. An attacker is assumed to initiate transaction with probability $\Pr_{malicious} (T_i^h)$. Given that each transaction is independent from other transactions, we can assume that the acceptance probability follows identical and independent distribution (i.i.d.). Thus, we simplify the probability of malicious attack risk as $\Pr_{malicious}$. Intuitively, $\Pr_{malicious}$ states the probability of a standard account being hacked.

Consider the design of RTM. With the introduction of vault account detailed in Section \ref{sec2-2}, each user can have vault account in addition to the traditional standard blockchain account. By design, vault account support cascade to form a chain of revocable account states. Figure \ref{fig01} illustrates the general design of cascaded vault account and associated RTM transactions.
 
\begin{figure}[h]
  \centering
  \includegraphics[width=\linewidth]{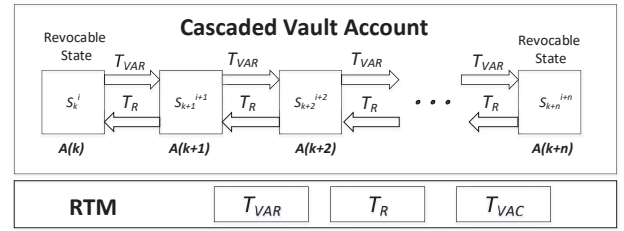}
  \caption{The design of cascade vault account in RTM security model.}
  \label{fig01}
\end{figure}

In this setting, we get the following theorem to describe the characteristics of vault account and RTM.

\begin{theorem}
 The vault account support $\varepsilon$-robust transaction.
\end{theorem}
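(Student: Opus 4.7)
The plan is to exploit the design of the vault account, in which a successful malicious transfer requires the attacker to defeat two independent defenses: (i) forging or stealing the private key of the vault account $A(k)$ to issue a malicious $T_{VAR}$, and (ii) preventing the legitimate user from issuing a $T_R$ within the delay period $\theta$ that would redirect the state to the retrieval account $I(k)$. Concretely, I would first bound $\Pr_{accepted}(T_i^h)$ for a legitimate revoke transaction by noting that, by Assumptions~1--3 and Definition~12, once the honest user broadcasts $T_R$ it is accepted with probability essentially $1$ after $O(1/\lambda)$ blocks, which by the design of the delay period satisfies $1/\lambda \ll \theta$. Hence, conditional on a malicious $T_{VAR}$ being published, the revoke succeeds unless the attacker \emph{also} compromises $I(k)$ before the revoke is confirmed.

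Next, I would invoke the i.i.d.\ assumption that the per-account malicious probability is $\Pr_{\mathit{malicious}}$, independent across accounts. The joint event that both the vault account and its retrieval account are compromised within the window $\theta$ therefore has probability at most $\Pr_{\mathit{malicious}}^{2}$, up to a lower-order term accounting for the probability that $T_R$ fails to reach height $h+1/\lambda$ before $\theta$, which by Assumption~2 is negligible. Substituting into Definition~13 gives
\begin{equation*}
\frac{\Pr_{\mathit{malicious}}^{\mathit{vault}}(T_i^h)}{\Pr_{\mathit{accepted}}(T_i^h)} \; \le \; \frac{\Pr_{\mathit{malicious}}^{2}}{\Pr_{\mathit{accepted}}(T_i^h)} \; < \; \varepsilon,
\end{equation*}
provided $\Pr_{\mathit{malicious}}$ is chosen small enough relative to the base robustness level of a standard account. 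This is the basic single-layer estimate; appealing to Figure~\ref{fig01}, a cascade of depth $d$ turns the numerator into $\Pr_{\mathit{malicious}}^{\,d+1}$, which drives the ratio below any prescribed $\varepsilon$ and, moreover, gives the $\varepsilon$-robust model bound of Definition~14 after averaging over all vault transactions in the chain up to height $h$.

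The main obstacle I expect is formalising the coupling between the discrete block-height dynamics and the continuous delay period $\theta$: I need to argue rigorously that, within any window of length $\theta$, the honest chain advances by enough blocks for $T_R$ to be confirmed with overwhelming probability, \emph{and} that a stealthy attacker cannot time the release of the malicious $T_{VAR}$ so close to $\theta$ that the revoke falls outside the acceptance window of Definition~12. I would handle this by invoking Assumption~1 (constant mining rate $\lambda$) together with a standard concentration bound on the number of blocks in an interval of length $\theta$, so that with probability $1-o(1)$ the honest chain produces at least $\lceil \lambda\theta / 2\rceil$ blocks after the publication of the malicious $T_{VAR}$, which suffices for confirmation of $T_R$. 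The remainder of the argument is then an algebraic manipulation of the ratio in Definition~13, and the conclusion for Definition~14 follows by summing the i.i.d.\ bounds across the at most $h$ vault transactions on chain.
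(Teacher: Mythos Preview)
Your core idea---use the i.i.d.\ assumption to multiply the per-account compromise probability across the cascaded vault accounts and thereby drive the malicious acceptance probability below $\varepsilon$---is exactly what the paper does. However, the paper's argument for this theorem is far more informal than your proposal: it consists of the single line
\[
\Pr_{\mathit{malicious}}(T_i^h)=\prod_{n}\Pr_{\mathit{malicious}}=\Pr_{\mathit{malicious}}^{\,n},
\]
followed by the observation that this is below $\varepsilon^n$, with a numerical example ($\varepsilon=0.1$, $n=5$). There is no explicit two-defense decomposition, no timing analysis of the revoke transaction relative to $\theta$, no concentration bound on block production in the delay window, and no discussion of an attacker timing the malicious $T_{VAR}$ near the boundary of the window.

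So your plan is not wrong, but most of the machinery you intend to build---the coupling between block-height dynamics and $\theta$, the $\lceil \lambda\theta/2\rceil$ concentration estimate, the careful bounding of $\Pr_{\mathit{accepted}}$ for the revoke---simply does not appear in the paper and is not needed to match its level of argument. If your goal is to reproduce the paper's proof, you can stop after writing down the product under i.i.d.\ and invoking Definition~13. If your goal is to give a genuinely rigorous proof, be aware that the paper's Definitions~12--14 are themselves too loosely stated (e.g., $\Pr_{\mathit{accepted}}$ is given as a set, not a probability) to support the precise inequalities you are aiming for; you would first have to repair those definitions before your concentration argument can be made formal. Finally, note that the averaging step you sketch at the end belongs to Theorem~2, which the paper treats separately.
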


With the strong assumption of i.i.d. among transactions, the acceptance probability of malicious attack for a vault account with n-level cascade can be minimized to 

\begin{center}
$\Pr_{malicious} (T_i^h )= \prod_{n} \Pr_{malicious} = \Pr_{malicious}^n$
\end{center}

Therefore, for $\varepsilon$-robust transactions, the probability of a vault account being malicious hacked is less than $\varepsilon^n$. Given a concrete example, if $\varepsilon=0.1$, and vault account support $5$ level cascade, then $\Pr_{malicious} (T_i^h )<0.1^5=1 \times 10^{-5}$.             

\begin{theorem}
RTM is $\varepsilon$-robust Model.
\end{theorem}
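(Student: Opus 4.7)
The plan is to reduce Theorem 2 to Theorem 1 by aggregating the per-transaction robustness bound over all transactions that an attacker might initiate, and then invoking the i.i.d. assumption stated in the preceding paragraph to convert per-transaction guarantees into a model-level guarantee. Because every asset-moving operation in RTM flows through either a standard account or a (possibly cascaded) vault account, I would split the attacker's transactions into these two cases and argue that the vault-account case dominates the security budget, with the standard-account case inheriting the robustness of the underlying Bitcoin-style protocol analysed in reference 14.

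First, I would apply Theorem 1 to each individual vault-account transaction $T_i^h$, obtaining $\Pr_{malicious}(T_i^h) < \varepsilon \cdot \Pr_{accepted}(T_i^h)$, tightened to $\varepsilon^n \cdot \Pr_{accepted}(T_i^h)$ when an $n$-level cascade is employed. Second, using the i.i.d. structure, linearity of expectation lets me sum both sides over all attacker-initiated transactions up to height $h$, giving
\begin{equation*}
\sum_{i \in T} \Pr_{malicious}(T_i^h) \; < \; \varepsilon \sum_{i \in T} \Pr_{accepted}(T_i^h).
\end{equation*}
Third, I would divide by $h$ and use Assumptions 1 and 2 (constant mining rate $\lambda$ and fast propagation relative to $1/\lambda$) to bound the right-hand side: the aggregate acceptance probability per unit height is a bounded constant, so the ratio on the left is at most $\varepsilon$ times an $O(1)$ factor which can be absorbed by rescaling $\varepsilon$. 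This is exactly the inequality demanded by Definition 14.

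The main obstacle will be the normalization step, since Definition 14 averages by block height $h$ rather than by the number of attempted transactions, and the numerator notation $\Pr(\varepsilon_{malicious} T_i^h)$ must be read as $\Pr_{malicious}(T_i^h)$ for the argument to close. I would justify the height normalization by noting that, under Assumptions 1--3, the longest-chain rule forces exactly one accepted block per mining step in expectation, so $\sum_{i \in T} \Pr_{accepted}(T_i^h) = \Theta(h)$, and the average malicious mass per height inherits the per-transaction $\varepsilon$ factor. A secondary concern is ensuring that cascaded vault accounts do not introduce correlated attack paths that break the i.i.d. assumption; here I would appeal to the independence of private keys at each cascade level, so that each level's compromise is a fresh Bernoulli trial and the product bound $\varepsilon^n$ from Theorem 1 applies uniformly. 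Combining these pieces yields the $\varepsilon$-robust model property for RTM.
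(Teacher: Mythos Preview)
Your reduction to Theorem~1 followed by aggregation and height-normalization is a genuinely different route from the paper's. The paper does not sum per-transaction $\varepsilon$-robustness bounds at all; instead it recasts the attack event as a random-walk (gambler's-ruin) problem in the style of the Nakamoto/Rosenfeld double-spend analysis. Concretely, it writes $\Pr_{malicious}$ as the series $\sum_{h\ge 0}\frac{1-\Pr_{accepted}}{\Pr_{accepted}}\bigl[\frac{\Pr_{accepted}}{1-\Pr_{accepted}}\bigr]^n$, asserts that this stochastically dominates a random attack, observes that a successful attack is equivalent to the adversary's walk ever reaching a cascaded vault account (i.e., building a private chain of height $\hat h>h$), and then uses $\Pr_{malicious}<0.5$ together with the cascade exponent $n$ to conclude $\Pr_{malicious}^n\to 0$, from which the averaged sum in Definition~14 falls below $\varepsilon$. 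Your argument has the virtue of being a clean black-box reduction to Theorem~1 and of making explicit where Assumptions~1--3 enter (your $\Theta(h)$ normalization of the acceptance mass), though the ``absorb the $O(1)$ constant into $\varepsilon$'' step is a rescaling the paper never performs and Definition~14 does not obviously license. The paper's argument, by contrast, never invokes Theorem~1 and instead obtains an asymptotic statement in the cascade depth $n$; it buys a direct link to the classical longest-chain security analysis but is less explicit about the finite-$h$ constant. Given how informal Definitions~12--14 are, both arguments are heuristic, but they are structurally distinct.
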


\begin{proof} Intuitively, 

\begin{center}
$\frac{\sum_h {\lbrack Pr_{malicious}\rbrack}^n}{h}  <  \varepsilon $  
\end{center}

We define that $\Pr_{malicious}=\sum_{h=0}^{\infty}  \frac{1-\Pr_{accepted}}{\Pr_{accepted}}\cdot{\lbrack \frac{\Pr_{accepted}}{1-\Pr_{accepted}} \rbrack}^n$ stochastically dominates the distribution of a random attack. The event in which the attack will accepted is then equivalent to the event that a random walk will ever arrive at a given cascaded vault account. Therefore, a necessary condition for a successful attack is that, an attacker had managed to create a chain with height $\hat{h}$, which is longer than the published chain, $\hat{h}>h$. Remind that in most cases, $\Pr_{malicious} <0.5$. This means that the acceptance probability $lim_{n \rightarrow \infty} \Pr_{malicious}^n \rightarrow 0$. 

\end{proof}

To summarize, the cascaded vault account design makes RTM a $\varepsilon$-robust Model. Theoretically, vault account can efficiently prevent malicious attack from being accepted by the blockchain.

%
%

\section{Further Security Considerations in RTM}\label{sec3}

The design of RTM is based on existing mature blockchain technologies. Noteworthy, the improved account model improves the security requirements in the actual applications, such as user key theft and transfer operation error. Additionally, RTM has introduced a radical new transaction type to tackle revocable state changes.

\subsection{The impact on transaction performance}\label{sec3-1}

To tackle the problem of user error, financial fraud and account transfer error, RTM has introduced new transaction types $T_{VAR}$ and $T_R$. Note that the implementation of $T_{VAR}$ and $T_R$ do not add additional complexity to the existing transaction model, and therefore do not degrade the overall performance of the system.

Generally, existing blockchain transaction models can be classified into two categories: Bitcoin's \textsc{Utxo}-based model and Ethereum's account model. For the \textsc{Utxo}-based model, RTM only needs to add an extra operation to check the type of the previous transaction, to ensure whether it is a revocable transaction. Remind that in order to verify the legality of the transaction, an extra operation to determine the transaction time (block height) should be added to RTM. To be specific, to query the balance of an account, it is necessary to distinguish between the revocable states and normal states in the unspent transactions. 

For Ethereum's account model, an account balance check operation should be added to the implementation of $T_{VAR}$ . To query the balance of an account, an extra account transfer operation is needed.

Overall, the extra operations needed to implement $T_{VAR}$ and $T_R$ have lower complexity , thus cannot significantly increase the cost of transaction implementation. Therefore, adding revocable transactions to the existing blockchain transaction model does not degrade the overall performance. In fact, it can only be seen as an advantage as it can prevent malicious users from attacking the system by initiating a huge number of revocable transactions.

\subsection{The impact on user operations}\label{sec3-2}

In order to tackle the problem of user key theft caused by hacking, RTM has improved the account model by introducing a radical new design named "\textit{Vault Account}". Henceforth, a user can have two types of accounts, i.e., a standard account and a vault account. Although the vault account improves the security of the account operation, it increases the complexity of the user operation. In order to avoid the security risk caused by user errors in real world implementation, the revocable account introduces extra operations for compliance verification in accordance with the business logic. Specifically, business logic compliance verification considers the following aspects:

\begin{itemize}
\item A vault account can only initialize revocable transactions $T_{VAR}$ and $T_R$, and $T_{VAR}$ / $T_R$ must also be initialized from a vault account.

\item By adopting differentiating encoding for different types of accounts, it is convenient for a user to intuitively distinguish between a retrieval account and a standard account, thus keeping users from mistakenly making a revocable transaction instead of a standard transaction.

\item A user can use the account balance to check revocable states (assets) and normal states. This helps users to confirm transaction types, and avoid blurring revocable transactions with standard transactions.
\end{itemize}

\subsection{The double spend attact}\label{sec3-3}

The "\textit{double spend}" attack means that an account can transfer the same asset to more than one account. The essence of this attack is to use some abnormal factors such as network delays to purposely cause illegal inconsistencies between the nodes in a distributed ledger system. Generally, a blockchain system can avoid the "double spend" attack by leveraging atomic broadcasting and consensus mechanisms. 

In RTM, the verification of the revocable transaction is the same as the verification of the standard transaction. To be specific, for a \textsc{Utxo}-based transaction model, the verification of a current transaction needs to specify the previous transaction. For the Ethereum-based account model, the verification needs to check the account balance. Once a transaction is issued, the assets transferred out of an account will be deducted..Therefore, revocation of the transaction will not result in additional double spend risk. Specifically, the mechanisms used by $T_{VAR}$ and $T_R$ can be detailed as :

\begin{itemize}
\item For $T_{VAR}$, the assets involved in $T_{VAR}$ cannot be transferred out multiple times. Like traditional  transfer business logic, the change of account status for a revocable transaction should be delayed. However, from the transaction point of view, as long as the transaction status change can be verified by the blockchain system, the correctness of the transaction can be guaranteed by the blockchain itself.

\item For $T_R$, the assets involved in $T_R$ cannot be withdrawn multiple times. Once a $T_R$ transaction is verified by a blockchain system, the correctness of the transaction can also be guaranteed by the blockchain itself. Furthermore, in the implementation of the business logic, a compliance rule is further defined, which defines that the receiving account cannot use the assets in a delay period θ. This compliance check further ensures the robustness of the revocable logic. 
\end{itemize}

\section{\textsc{Gate}\textsc{Chain}: A Verification implementation of RTM}\label{sec4}

Using the proposed transaction model RTM, in this paper we implemented a verification blockchain system called \textsc{Gate}\textsc{Chain}. Specifically, \textsc{Gate}\textsc{Chain} adopted the RTM design, with its structure based on the state-of-the-art platform Tendermint \cite{ref13}, leveraging the distributed network and consensus mechanism provided by Tendermint Core, and the application layer implemented by using Cosmos SDK. Figure \ref{fig02} illustrates the general structure of \textsc{Gate}\textsc{Chain}.

\begin{figure}[h]
  \centering
  \includegraphics[width=0.85\linewidth]{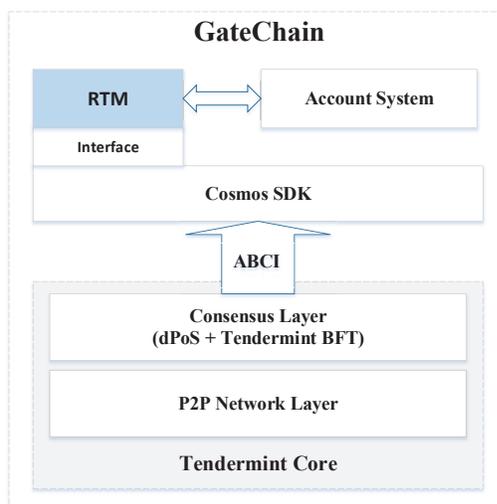}
  \caption{The general structure of \textsc{Gate}\textsc{Chain}, a verification implementation of RTM.}
  \label{fig02}
\end{figure} 

In essence, \textsc{Gate}\textsc{Chain} implemented several verification features targeting RTM. First, \textsc{Gate}\textsc{Chain} introduced the vault account to hold revocable states. Each vault account should bind a safety account as the revoked output of the vault account. Second, \textsc{Gate}\textsc{Chain} introduced the revocable transaction logic proposed by RTM. This application logic is well suited for the circumstance when a user is securely holding his private key but issues an unintended transfer transaction, for example hacking, misuse or financial fraud. Third, to support the revocable transaction logic, \textsc{Gate}\textsc{Chain} presented a solid implementation of $T_{VAR}$,$T_R$  and $T_{VAC}$ , as proposed in Section \ref{sec2-3}.

\subsection{Implementation of Account Model} \label{sec4-1}

In order to simultaneously support the revocable transactions and the standard transactions, \textsc{Gate}\textsc{Chain} has implemented two types of accounts: the standard account and the vault account.
 
The standard account has no difference with the account implemented in \textsc{Utxo}-based blockchain systems, and it is applicable to the standard account transfer scenarios. By default, the transferred assets arrive immediately, showing no revocable behavior. This is well suited for general transfer applications. 

The vault account supports revocable transactions. The assets transferred from this type of account can only arrive after the designated delay period $\theta$, while the transaction itself is adequately recorded by the blockchain system. In a single run, the status of a vault account will remain unchanged until the delay period expires. Noting that during the delay period, the vault account can initiate another transaction to revoke the aforementioned transaction, and withdraw the corresponding assets back to a designated retrieval account. 

As stated in section \ref{sec2-2}, the target account which accepts the withdrawn assets from the vault account is called a retrieval account, which can be a standard account or a vault account. It is worth noting that the withdrawn assets of the vault account are not directly returned to the originating vault account itself, but to a retrieval account its bounded to.

The introduce of retrieval account increases the security of revocable transactions. From an application point of view, distinguishing the account type makes clear to the user:

\begin{itemize}
  \item Whether the current account is a vault account or a standard account
  \item Whether the assets transferred by the account are revocable
  \item Whether the revocable transaction has taken effect. 
\end{itemize}

Therefore, this implementation lets the user select different account types according to their actual needs.

Note that a risk must be considered in real application scenario, which can be called the revoke of a delay transfer. To be specific, a malicious payer can initiate a revoke transaction which tentatively transfer certain amount to a payee. If payee mistakenly consider this as a normal transfer, and concluded the transaction in turn. The malicious payer can revoke the tentative transfer, thus causes an intended business fraud. To tackle this risk, \textsc{Gate}\textsc{Chain} introduces two encoding mechanism to technically explicify the type of transaction. Firstly, each account is encoded with a prefix to distinguish between standard account and vault account. From the prefix adhere to an account address, use can explicitly judge the type of the initiated transaction. Secondly, each transaction id is also encoded with a predefined prefix to stamp different transaction type, which details in Section 4.2.  Encoding mechanism technically prevent the user from mistaken a revoke transaction with a standard transaction, thus  take adequate action henceforth.       

\subsection{Implementation of Transaction Model}

In addition to the standard transaction types supported by a vanilla \textsc{Utxo}-based blockchain , \textsc{Gate}\textsc{Chain} introduces two types of revocable transactions named $T_{VAR}$ and $T_R$.  Both types of transactions can only be initiated by a vault account, supporting revocable states. The consensus of \textsc{Gate}\textsc{Chain} is supported by the underlying consensus mechanism provided by the Tendermint platform. The implementation of these two transactions does not add any additional cost to the system as a whole. To be specific, different types of transactions are distinguished by a set prefix encoding. That is, a transaction is named following the pattern of "fixed transaction type + transaction hash value".The transaction prefix helps the user to clearly distinguish the transaction type, avoiding *unintended* operations and invalid operations. On the other hand, this design can help the blockchain itself and the application to check the correctness of a transaction, thus improving the security of the system. Details of transaction prefixes are listed as Table \ref{tab:commands}.

\begin{table*}
  \caption{List of predefined transaction prefixes}
  \label{tab:commands}
  \begin{tabular}{ccl}
    \toprule
    Transaction type & Transaction prefix\\
    \midrule
    Irrevocable pay & INREVOCABLEPAY-<id> \\
    Revocable pay & REVOCABLEPAY-<id>\\
    Setting an account & ACCOUNTSET-<id>\\
    Revoke a revocable pay &  REVOKE-<id>\\
  Clear a Vault Account & VAULTCLEAR-<id>\\
  Regular transaction & BASIC-<id>\\
    \bottomrule
  \end{tabular}
\end{table*}

The revocable logic of the transaction is applicable to the scenario in which the user properly owns his private key, and wants to revoke the transfer operations initiated by a vault account. The logic is implemented as vault account A initiates a transaction to transfer some assets to a designated account B, and sets the transaction type as $T_{VAR}$, i.e., a revocable transaction. After the transaction is successfully executed, the transaction and the delay period are recorded in account $A$ and $B$ separately. However, the transferred assets are not immediately available in both $A$ and $B$. In the event that the transaction delay period has expired, the transfer transaction is officially effective and the assets are officially available in $B$. Therefore, during the delay period of the transaction, the vault account $A$ itself may initiate a revoke transaction $T_R$, the asset is withdrawn to a safety account bounded by $A$. Note that the operation of $T_R$  takes effect immediately, and the transfer operation is cancelled for account $B$. Generally, the safety accounts are typically located on a third-party trustworthy platform. The transaction itself is recorded in the blocks of \textsc{Gate}\textsc{Chain}, and the inactive transactions maintained in the account are cleared by \textsc{Gate}\textsc{Chain}.

To conclude, \textsc{Gate}\textsc{Chain} implements a lightweight RTM based on existing mature blockchain technologies, namely, the Tendermint platform and the Cosmos SDK.

\section{Conclusion and Outlook}\label{sec5}

Real world business, such as financial transfers, government regulation, copyright transactions, personal privacy, and transaction rollback logic, all natively require that the blockchain system should support revocable transactions. Regarding the perspective of implementation, revocable and non-tamperable are not two contradictory concepts. The reason lies in that as long as a revocable transaction is verified on the blockchain, the transaction itself is supported by the mechanism provided by a blockchain. Thus, we can add revocable features into existing blockchain systems by improving the design of the transaction model. This has proven to be the best way in natively implementing a revocable-supported blockchain.

We believe that, as an innovative technology, the blockchain ecosystem itself is continuously evolving. Some new features will be reflected in the design and implementation of the blockchain in the near future, and in the form of "native" blockchain characteristics. Undoubtedly, revocable transactions are such a feature. Although blockchain technology is still in its exploratory phase in many areas, it is certain that it will continue to evolve and improve, leading to wide adoption in many fields.

\bibliographystyle{ACM-Reference-Format}
\bibliography{gatechain-rtm}

\end{document}